\newtheorem{theorem}{Theorem}[section]
\newtheorem{lemma}[theorem]{Lemma}
\newcommand{\Oh}{\mathcal{O}}
\newcommand\thankssymb[1]{\textsuperscript{\@fnsymbol{#1}}}
\begin{document}

\title{Lower Bounds for the Number of Repetitions in 2D Strings}
\date{}

\author[1]{Paweł Gawrychowski\thanks{Partially supported by the Bekker programme of the Polish National Agency for Academic
Exchange (PPN/BEK/2020/1/00444).}}
\author[2]{Samah Ghazawi\thanks{Partially supported by the Israel Science Foundation grant 1475/18, and Grant No. 2018141 from the United States-Israel
Binational Science Foundation (BSF).}}
\author[1]{Gad M. Landau\thankssymb{2}}

\affil[1]{Institute of Computer Science, University of Wroc\l{}aw, Poland}
\affil[2]{Department of Computer Science, University of Haifa, Israel}
\affil[3]{NYU Tandon School of Engineering, New York University, Brooklyn, NY, USA}

\maketitle
\begin{abstract}
    A two-dimensional string is simply a two-dimensional array.
    We continue the study of the combinatorial properties of repetitions in such strings over the binary alphabet,
    namely the number of distinct tandems, distinct quartics, and runs.
    First, we construct an infinite family of $n\times n$ 2D strings with $\Omega(n^{3})$ distinct tandems.
    Second, we construct an infinite family of $n\times n$ 2D strings with $\Omega(n^{2}\log n)$ distinct quartics.
    Third, we construct an infinite family of $n\times n$ 2D strings with $\Omega(n^{2}\log n)$ runs.
    This resolves an open question of Charalampopoulos, Radoszewski, Rytter, Waleń, and Zuba
    [ESA 2020], who asked if the number of distinct quartics and runs in an $n\times n$ 2D string is $\Oh(n^{2})$.
\end{abstract}
\section{Introduction}

The study of repetitions in strings goes back at least to the work of Thue from 1906~\cite{thue}, who constructed an infinite
square-free word over the ternary alphabet. Since then, multiple definitions of repetitions have been proposed
and studied, with the basic question being focused on analyzing how many such repetitions a string of length $n$ can contain.
The most natural definition is perhaps that of palindromes, which are fragments that read the same either from left to right or right to left.
Of course, any fragment of the string $\texttt{a}^{n}$ is a palindrome, therefore we would like to count distinct
palindromes. An elegant folklore argument shows that this is at most $n+1$ for any string of length $n$~\cite{DROUBAY2001539},
which is attained by $\texttt{a}^{n}$.

Another natural definition is that of squares, which are fragments of the form $xx$, where $x$ is a string.
Again, because of the string $\texttt{a}^{n}$ we would like to count distinct squares. 
Using a combinatorial result of Crochemore and Rytter~\cite{Crochemore95}, Fraenkel and Simpson~\cite{FS} proved that a string of length $n$
contains at most $2n$ distinct squares (see also~\cite{Ilie05} for a simpler proof, and~\cite{Ilie07} for an upper bound
of $2n-\Theta(\log n)$). They also provided an infinite family of strings of length $n$ with $n-o(n)$ distinct squares.
It is conjectured that the right upper bound is actually $n$, however so far we only know that it is at most
$11/6n$~\cite{DEZA}.
Interestingly, a proof of the conjecture for the binary alphabet would imply it for any alphabet~\cite{ManeaS15}.

Perhaps a bit less natural, but with multiple interesting applications, is the definition of runs. A run is
a maximal periodic fragment that is at least twice as long as its smallest period. Roughly speaking, runs
capture all the repetitive structure of a string, making them particularly useful when constructing algorithms~\cite{CROCHEMORE2014}.
A well-known result by Kolpakov and Kucherov~\cite{Kolpakov} is that a string of length $n$ contains
$\Oh(n)$ runs; they conjectured that it is actually at most $n$. After a series of improvements~\cite{rytter2,PSS,ilie2},
with the help of an extensive computer search the upper bound was decreased to $1.029n$~\cite{ilie3,G}.
Finally, in a remarkable breakthrough Bannai et al.~\cite{Bannai17} confirmed the conjecture.
On the lower bound side, we current know an infinite family of strings with at least $0.944575712n$ runs~\cite{FY,MKIBS,simpson}.
Interestingly, better bounds are known for the binary alphabet~\cite{FHIL}.

Given that we seem to have a reasonably good understanding of repetitions in strings,
it is natural to consider repetitions in more complex structures, such as circular strings~\cite{AG,CurrieF02,Simpson14}
or trees~\cite{CrochemoreIKKRRTW12,GawrychowskiKRW15,KociumakaRRW17}.
In this paper, we are interested in repetitions in 2D strings. Naturally, algorithms operating
on 2D strings can be used for image processing, and combinatorial properties of such strings
can be used for designing efficient pattern matching algorithms~\cite{Amir,Amir1,Amir2,Amir3,Cole}.
Therefore, we would like to fully understand what is a repetition in a 2D string, and what is the
combinatorial structure of such repetition.

Apostolico and Brimkov~\cite{AB} introduced the notions of tandems and quartics in 2D strings.
Intuitively, a tandem consists of two occurrences of the same block $W$ arranged in a $1\times 2$ or $2\times 1$ pattern,
while a quartic consists of 4 occurrences of the same block $W$ arranged in a $2\times 2$ pattern. They
considered tandems and quartics with a primitive $W$, meaning that it cannot be partitioned
into multiple occurrences of the same $W'$ (called primitively rooted in the subsequent work~\cite{CRRWZ}),
and obtained asymptotically tight bounds of $\Theta(n^{2}\log^{2}n)$ and $\Theta(n^{2}\log n)$
for the number of such tandems and quartics in an $n\times n$ 2D string, respectively. 
Both tandems and quartics should be seen as an attempt to extend the notion of squares in a 1D
string to 2D strings, and thus the natural next step is to consider distinct tandems and quartics (without restricting $W$ to be primitive).
Very recently, Charalampopoulos et al.~\cite{CRRWZ} studied the number of distinct tandems and quartics in an $n\times n$
2D string.
For distinct tandems, they showed a tight bound of $\Theta(n^{3})$ with the construction used in the lower
bound using an alphabet of size $n$.
For distinct quartics, they showed an upper bound of $\Oh(n^{2}\log^{2}n)$ and conjectured
that it is always $\Oh(n^{2})$, similarly to the number of distinct squares in a 1D string of length $n$
being $\Oh(n)$.

Amir et al.~\cite{ALMS,ALMS1} introduced the notion of runs in 2D strings.
Intuitively, a 2D run is a maximal subarray that is both horizontally and vertically periodic; we defer
a formal definition to the next section.
They proved that an $n\times n$ 2D string contains $\Oh(n^{3})$ runs, showing an infinite
family of $n\times n$ 2D strings with $\Omega(n^{2})$ runs.
Later, Charalampopoulos et al.~\cite{CRRWZ} significantly improved on this upper bound,
showing that an $n\times n$ 2D string contains $\Oh(n^{2}\log^{2}n)$ runs,
and conjectured that it is always $\Oh(n^{2})$, similarly to the number of runs in a 1D string of length $n$
being $\Oh(n)$.

\paragraph{Our results.}
In this paper, we consider 2D strings and obtain improved lower bounds for the number of distinct tandems, distinct quartics,
and runs. We start with the number of distinct tandems and extend the lower bound of Charalampopoulos et al.~\cite{CRRWZ}
over the binary alphabet in Section~\ref{sec:tandems} by showing the following.

\begin{restatable}{theorem}{tandem}
\label{thm:tandem}
There exists an infinite family of $n\times n$ 2D strings over the binary alphabet containing
$\Omega(n^{3})$ distinct tandems.
\end{restatable}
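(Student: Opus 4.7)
The plan is to reduce counting distinct 2D tandems to counting distinct 1D factors via a completely vertically periodic construction. First I would fix a binary string $s$ of length $n$ with $\Omega(n^{2})$ distinct factors (summed over all lengths); such an $s$ exists — for concreteness, take a binary de Bruijn sequence of order $k = \lceil \log_{2} n \rceil$ and pad or truncate it to length $n$. In a de Bruijn sequence every length-$k$ window is distinct, which forces every length-$w$ factor for $w \geq k$ to be distinct as well (its first $k$ symbols already pin down the starting position), so $\sum_{w=k}^{n}(n-w+1) = \Omega(n^{2})$ distinct factors arise in total.

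Next I would define the $n \times n$ binary 2D string $M_{n}$ by $M_{n}[i][j] = s[j]$, so that every row of $M_{n}$ is a copy of $s$. For any $h \in \{1, \ldots, \lfloor n/2 \rfloor\}$, any $w \in \{1, \ldots, n\}$, and any column $c$ with $c+w-1 \leq n$, the $2h \times w$ subarray with top-left corner $(1, c)$ is automatically a vertical tandem: all of its $2h$ rows equal the factor $s[c..c+w-1]$, so the top and bottom $h \times w$ halves coincide. The tandem's half-block $W$ is the $h \times w$ matrix formed by stacking $h$ copies of $s[c..c+w-1]$, and it is completely determined by the pair $(h,\, s[c..c+w-1])$: the height is recoverable as the number of rows of $W$, and any single row of $W$ recovers the underlying 1D factor.

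Consequently, distinct pairs $(h, f)$ — with $h \in \{1, \ldots, \lfloor n/2 \rfloor\}$ and $f$ a distinct factor of $s$ — yield distinct $W$-blocks and hence distinct vertical tandems, for a total of at least $\lfloor n/2 \rfloor \cdot \Omega(n^{2}) = \Omega(n^{3})$. The only nontrivial ingredient is producing a binary 1D string with $\Omega(n^{2})$ distinct factors, which is a standard fact; no real obstacle arises beyond that. Indeed, the same argument over the alphabet $\{1, \ldots, n\}$ with $s = 1\,2\cdots n$ recovers the original lower bound of Charalampopoulos et al., so binarization reduces precisely to finding a binary analogue of that richest possible 1D string.
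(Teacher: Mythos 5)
Your proof is correct, but it takes a genuinely different route from the paper's. The paper keeps the shape of the Charalampopoulos et al.\ construction: each row is built from three blocks $0\cdots 01$ of length $2^{\ell}$ interleaved with two copies of the binary expansion of the row index, and the counted tandems are \emph{horizontal} ones, all of the same width $2(2^{\ell}+\ell)$, parameterized by a top row $i$, a bottom row $j$, and a horizontal offset $k\in\{1,\dots,2^{\ell}\}$; distinctness is argued by decoding $(i,j,k)$ from the tandem (the position of the leftmost $1$ in the top row gives $k$, the following $\ell$ cells give $i$, and the height gives $j$). You instead make the array constant along columns, with every row equal to a single binary word of subword complexity $\Omega(n^{2})$, and count \emph{vertical} tandems of all heights and widths; distinctness reduces to the 1D fact that a truncated de Bruijn word of order $\lceil\log_{2}n\rceil$ has $\Omega(n^{2})$ distinct factors. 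Your reduction is more modular and isolates exactly what is needed --- a binary analogue of the maximally complex word $1\,2\cdots n$ used in the original large-alphabet lower bound --- whereas the paper's construction is a direct binarization of that earlier example via row-identifying gadgets. One small caution: ``pad or truncate'' should just be ``truncate''; the linearized de Bruijn sequence of order $k=\lceil\log_{2}n\rceil$ already has length $2^{k}+k-1\geq n$, and truncation preserves the distinctness of all length-$k$ windows, while padding with a fixed symbol could destroy it. This is cosmetic and does not affect the correctness of the argument.
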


Then, we move to the number of distinct quartics in Section~\ref{sec:quartics} and the number of runs in Section~\ref{sec:runs}, and show the following.

\begin{restatable}{theorem}{quartics}
\label{thm:quartics}
There exists an infinite family of $n\times n$ 2D strings over the binary alphabet containing
$\Omega(n^{2}\log n)$ distinct quartics.
\end{restatable}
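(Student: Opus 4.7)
The plan is to construct, for infinitely many $n$, an explicit binary $n\times n$ 2D string $A_{n}$ containing $\Omega(n^{2}\log n)$ distinct quartics, refuting the $\Oh(n^{2})$ conjecture of Charalampopoulos et al. A sanity check shows that the simplest attempt---the uniform construction $A[i][j] := F[i] \oplus F[j]$ built from a single 1D Sturmian word $F$---yields only $\Theta(n^{2})$ distinct quartics: the block $W$ is determined, up to a global bit-flip, by the pair of square-roots $(F[i\ldots i+h-1], F[j\ldots j+w-1])$, and the Fraenkel--Simpson bound caps the total number of distinct squares of $F$ at $\Oh(n)$. Hence the construction must introduce a genuinely 2D multi-scale repetitive structure of depth $\Theta(\log n)$.

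I would take $A_{n}$ to be a hierarchical (recursive or substitutive) 2D word that admits distinct quartics at $\Theta(\log n)$ different shape-scales $2^{\ell}$. A natural candidate is a 2D Sturmian or Fibonacci-type word obtained by iterating a carefully chosen 2D morphism, or a 2D word organised into horizontal bands of geometrically varying heights, where different bands contain different shifts or morphic distortions of a 1D Fibonacci word so that quartics straddling several bands acquire distinctive 2D block content not expressible as $h$-fold repetitions of a single 1D factor. For each level $\ell$, the goal is to exhibit $\Omega(n^{2})$ positions that serve as the top-left corner of a quartic of scale $\Theta(2^{\ell})$, each giving a distinct $W$-block via the low-factor-complexity property of the underlying Sturmian word. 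Summing over the $\Theta(\log n)$ levels---with quartics of different scales trivially distinct because they have different shapes---delivers the target $\Omega(n^{2}\log n)$.

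The main obstacle will be establishing within-scale distinctness: the whole $\log n$ gain depends on showing that distinct positions truly give distinct $W$-blocks and that no scale-level over-counts because of incidental collisions with a neighbouring scale. I expect to handle this by leveraging the Sturmian property that length-$m$ factors of $F$ number exactly $m+1$---so distinct shifts produce distinct 1D factors---together with the rigidity imposed by the 2D substitution or band structure, which should let one read off from $W$ both its scale and its underlying shift. A secondary subtlety is the binary-alphabet restriction, but Fibonacci-like words are inherently binary, so no extra alphabet encoding is required and the construction should proceed in the same spirit as the binary-alphabet construction used in Theorem~\ref{thm:tandem}.
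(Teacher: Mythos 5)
Your high-level strategy is the right one and matches the paper's: produce quartics at $\Theta(\log n)$ different shape classes, get $\Omega(n^{2})$ distinct quartics per class, and note that quartics of different shapes are automatically distinct. Your opening sanity check (the XOR-of-a-Sturmian-word construction caps out at $\Theta(n^{2})$ by Fraenkel--Simpson) is also a correct and useful observation. However, what follows is a research plan, not a proof: no array is actually defined, and the step you yourself identify as the crux --- within-scale distinctness --- is not only unresolved but is attacked with a tool that points in the wrong direction. The Sturmian property you invoke, that a Sturmian word has exactly $m+1$ factors of length $m$, is a \emph{low} factor-complexity statement: among the $\Theta(n)$ positions of a length-$m$ window, only $m+1$ distinct factors arise, so most positions yield the \emph{same} content. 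To get $\Omega(n^{2})$ distinct quartic roots per scale you need the opposite: a mechanism that makes essentially every admissible position of the window yield a different block, and simultaneously guarantees that each such block really is a $2\times 2$ repetition. Nothing in the band/morphism sketch supplies this, and for square-ish blocks of side $2^{\ell}\ll n$ the demand that nearly all $\Theta(n^{2})$ positions give pairwise distinct quartic roots is extremely stringent.

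The paper's construction resolves exactly this point by a different mechanism: the arrays are almost entirely $0$s, with a small number of special markers placed so that each counted quartic is a large mostly-zero block containing exactly one marker. Sliding the quartic window horizontally and vertically moves the marker to a different offset inside the root $W$, so every position gives a distinct $W$ for free, and markers in different regions/copies are made distinct characters. The levels of the recursion produce roots of dimensions roughly $3^{i}\times n/3^{i}$ (constant area $\Theta(n)$, varying aspect ratio), each level contributing $\Theta(n^{2})$ distinct quartics. The binary-alphabet step is then \emph{not} free: each special character is replaced by a $k\times k$ gadget encoding its identity in binary, and a separate alignment lemma (recovering positions modulo $k$ from any sufficiently large subarray) is needed to re-establish distinctness. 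Your claim that binarity comes for free because Fibonacci words are binary sidesteps this, but only because your construction is not concrete enough to need it. To turn your proposal into a proof you would need to either exhibit a substitutive 2D word and prove the $\Omega(n^{2})$-distinct-roots-per-scale property for it (which low-complexity words will resist), or switch to a marker-based scheme of the kind the paper uses.
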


\begin{restatable}{theorem}{runs}
\label{thm:runs}
There exists an infinite family of $n\times n$ 2D strings over the binary alphabet containing
$\Omega(n^{2}\log n)$ runs.
\end{restatable}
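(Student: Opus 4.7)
My plan is to derive Theorem~\ref{thm:runs} from the same construction (or a slight modification thereof) used to prove Theorem~\ref{thm:quartics}. The geometric link is that every quartic is a $2h \times 2w$ block consisting of four copies of an $h \times w$ tile, hence horizontally $w$-periodic and vertically $h$-periodic; in particular it is contained in a unique maximal 2D periodic region, which by definition is a 2D run. Mapping each quartic to its enclosing run therefore gives a map from the $\Omega(n^{2}\log n)$ distinct quartics produced by Theorem~\ref{thm:quartics} into the set of runs of the same 2D string.

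The heart of the argument is to show that this quartic-to-run map is $\Oh(1)$-to-$1$. A single run of dimensions $H \times W$ with periods $h, w$ can in principle contain as many as $\lfloor H/(2h)\rfloor \cdot \lfloor W/(2w) \rfloor$ distinct quartics (one for each integer scaling of the periodic tile along each axis), so if even one run were very large relative to its periods, many quartics would collapse into the same run. I would therefore revisit the construction of Theorem~\ref{thm:quartics} and verify a \emph{tightness} property: every quartic it produces lies inside a run of dimensions bounded by $\Oh(h) \times \Oh(w)$, with periodicity blocked in all four directions by the ambient 2D string. Given this, each run receives at most $\Oh(1)$ quartics and the $\Omega(n^{2}\log n)$ lower bound transfers from quartics to runs.

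The main obstacle is establishing this tightness rigorously: 2D periodicity can propagate in non-obvious ways, and small periodic blocks may a priori sit inside much larger periodic regions. I would first try to read the tightness directly off the scale-by-scale structure of the quartics construction — the fact that the construction presumably uses different local patterns at different scales typically ensures that the run containing a scale-$k$ quartic cannot extend beyond a scale-$k$ window. Should the construction fall short, I would modify it minimally, toggling a small number of boundary bits around each quartic to block further periodic extension, and verify that this perturbation preserves both the $\Omega(n^{2}\log n)$ quartic count and the binary-alphabet restriction. As a backup strategy, I would count runs directly scale by scale, exhibiting for each $k \in \{1,\dots,\Theta(\log n)\}$ some $\Omega(n^{2})$ positions where a run of characteristic size $\Theta(2^{k})$ is maximal, and arguing that runs at different scales are distinguished by their bounding-box dimensions, giving $\sum_{k} \Omega(n^{2}) = \Omega(n^{2}\log n)$ runs.
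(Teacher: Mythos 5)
Your main plan has a genuine gap: the quartic-to-enclosing-run map is very far from $\Oh(1)$-to-$1$ on the construction of Theorem~\ref{thm:quartics}, and this is not a fixable technicality but the reason the two problems need different constructions. Recall that two quartics are \emph{distinct} when their tiles $W\neq W'$ differ as arrays, i.e., distinctness is by content, not by occurrence. The quartics construction exploits exactly this: at the bottom level, $A_1$ is all $0$s except for two special columns at distance $p=\frac{n+1}{3}$, so the \emph{entire} array $A_1$ is horizontally periodic with period $p$ (and vertically periodic with period $1$); it is a single run. The $\frac{n+1}{3}$ quartics counted there are horizontal shifts of one window inside this single periodic region --- they are distinct because the special character lands at a different offset inside the tile $W$ for each shift, but they all map to the same enclosing run. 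The same happens at every level $i$: the $(\frac{N_{i-1}-2}{3}+1)(M_{i-1}+1)$ quartics counted per range are horizontal and vertical shifts within one large doubly periodic block. So your map is $\Omega(n)$-to-$1$ (or worse) per scale, and the ``tightness'' property you hope to verify is false. Your repair --- toggling boundary bits to block periodic extension around each quartic --- would destroy the very shifts that make those quartics distinct, so it cannot preserve the $\Omega(n^2\log n)$ quartic count in the form needed.

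Your backup strategy is essentially the right one and is what the paper does, but it is the whole proof, not a fallback detail. The paper builds a \emph{separate} recursive family: $A_i$ is a $4\times 4$ grid of copies of $A_{i-1}$ in which the antidiagonals of the top-left and bottom-right copies are overwritten with $1$s. Any subarray whose opposite corners touch those two antidiagonals has horizontal and vertical period $N_{i-1}$, and the antidiagonal $1$s (against the all-$0$ antidiagonal of an unmodified copy) block every one-row/one-column extension, so each such subarray is a maximal run; this yields $(N_{i-1}-1)^2=\Theta(16^i)$ runs per copy of $A_i$. The remaining work, which your sketch does not address, is a counting argument showing that $A_\ell$ contains $X_i\geq\frac{5}{6}16^{\ell-i}$ intact copies of $A_i$ (one must track separately the modified copies $A'_i$, which lose some sub-copies to the antidiagonal), so that each scale contributes $\Theta(16^\ell)=\Theta(n^2)$ runs and the $\Theta(\log n)$ scales sum to $\Omega(n^2\log n)$. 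As written, your proposal would not yield the theorem without carrying out this construction and count in full.
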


\noindent By the above theorem, the algorithm of Amir et al.~\cite{ALMS} for locating all 2D runs
in $\Oh(n^{2}\log n+\textsf{output})$ time is worst-case optimal.

Our constructions exhibit a qualitative difference between distinct squares and runs in 1D strings
and distinct quartics and runs in 2D strings. The number of the former is linear in the size of the input,
while the number of the latter, surprisingly, is superlinear.

\paragraph{Our techniques.}
For distinct tandems, our construction is similar to that of~\cite{CRRWZ}, except that we use distinct characters
only in two columns. This allows us to replace them by their binary expansions, with some extra care
as to not lose any counted tandems.

For both distinct quartics and runs, we proceed recursively,
constructing larger and larger 2D strings $A_{i}$ starting from the initial 2D string $A_{1}$.
The high-level ideas behind both constructions are different, though.

For distinct quartics, our high-level idea is to consider subarrays with $\Theta(\log n)$ different
aspect ratios. For each such aspect ratio, we create $\Omega(n^{2})$ distinct quartics, for an $n\times n$ array.
Each step of the recursion corresponds to a different aspect ratio and creates multiple new special characters,
as to make the new quartics distinct; later we show how to implement this kind of approach with the binary alphabet.

For runs, we directly proceed with a construction for the binary alphabet, and build on the insight used
by Charalampopoulos et al.~\cite{CRRWZ} to show that the same quartic can be induced by $\Theta(n^{2})$ runs.
Each step of the recursion corresponds to runs with asymptotically the same size. 
This needs to be carefully analyzed in order to lower bound the overall number of runs. 

\section{Preliminaries}

Let $\Sigma$ be a fixed finite alphabet. A two-dimensional string (or 2D string, for short) over $\Sigma$
is an $m\times n$ array $A[0..m-1][0..n-1]$ with $m$ rows and $n$ columns, with every cell $A[i][j]$ containing
an element of $\Sigma$. Furthermore, we use $\epsilon$ to denote an empty 2D string.
A subarray $A[x_1..x_2][y_1..y_2]$ of $A[0..m-1][0..n-1]$ is an $(x_{2}-x_{1}+1) \times (y_{2}-y_{1}+1)$ array
consisting of cells $A[i][j]$ with $i\in [x_{1},x_{2}], j\in [y_{1},y_{2}]$.

We consider three notions of repetitions in 2D strings.
\begin{description}
  \item[Tandem.] A subarray $T$ of $A$ is a tandem if it consists of $2\times1$ (or $1\times2$) subarrays $W\neq \epsilon$.
   Two tandems $T=\begin{array}{|c|c|}
    \hline
   W & W \\
    \hline
  \end{array}$ and $T'=\begin{array}{|c|c|}
    \hline
   W' & W' \\
    \hline
  \end{array}$ are distinct when $W\neq W'$. 
    \item[Quartic.] A subarray $Q$ of $A$ is a quartic if it consists of $2\times2$ subarrays $W \neq \epsilon$.
Two quartics $Q=\begin{array}{|c|c|}
    \hline
   W & W \\
   \hline
   W & W\\
    \hline
  \end{array}$ and $Q'=\begin{array}{|c|c|}
    \hline
   W' & W' \\
   \hline
   W' & W'\\
    \hline
  \end{array}$ are distinct when $W\neq W'$.
  \item[Run.] Consider an $r\times c$ subarray $R$ of $A$. We define a positive integer $p$ to be its horizontal period
  if the $i^\text{th}$ column of $R$ is equal to the $(i+p)^\text{th}$ column of $R$, for all $i=1,2,\ldots,c-p$.
  The horizontal period of $R$ is its smallest horizontal period, and we say that $R$ is $h$-periodic when its horizontal period
  is at most $c/2$. Similarly, we define a vertical period, the vertical period, and a $v$-periodic subarray. 
  An $h$-periodic and $v$-periodic $R$ is called a run when extending $R$ in any direction would result in a subarray
  with a larger horizontal or vertical period. Informally, for such $R$ there exists a subarray $W$ such that we can represent
  $R$ as follows, with at least two repetitions of $W$ in both directions, and we cannot extend $R$ in any direction
  while maintaining this property.
  
  \begin{center}
    $R=\begin{array}{|c|c|c|c|}
    \hline
   W &\ldots& W &W'\\
   \hline
   \ldots&\ldots&\ldots&\ldots\\
   \hline
   W &\ldots& W&W'\\
    \hline
    W'' &\ldots& W''&W'''\\
    \hline
  \end{array}$
  \end{center}
  Where
  $W = \begin{array}{|c|c|}\hline
  W'&U\\
  \hline
  \end{array}$, $W = \begin{array}{|c|}\hline
  W''\\
  \hline
  U'\\
  \hline
  \end{array}$ and $W=\begin{array}{|c|c|}
    \hline
   W''' & V \\
   \hline
   V' & V'''\\
    \hline
  \end{array}$,
    and any of the subarrays $W',W'',W''',U,U',V,V'$ and $V'''$ may be $\epsilon$.
\end{description}

\section{Distinct Tandems}
\label{sec:tandems}
In this section, we show how to construct an $n\times n$ array $A$ over the binary alphabet with $\Theta(n^{3})$ distinct tandems,
for any $\ell\geq1$, where $n=3\cdot 2^{\ell}+2\ell$.
The $i^{\text{th}}$ row of $A$ is divided into 5 parts, see Figure~\ref{fig:rowi}.
The first, third, and fifth part each consists of $2^{\ell}$ cells, each containing the binary representation of 1.
The second and fourth part each consists of $\ell$ cells that contains the binary representation of the number $i-1$.
Hence, all rows of $A$ are different, see Figure~\ref{fig:tandems}.

\begin{figure}[ht]
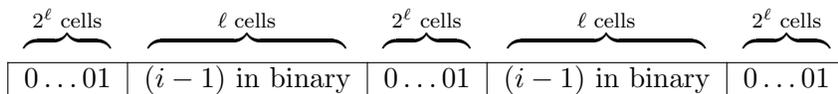

\begin{adjustbox}{max width=\textwidth}
\begin{tabular}{ *{5}{c} }
    $\overbrace{\hphantom{0 \ldots 01}}^{2^\ell\text{ cells}}$&$\overbrace{\hphantom{(i-1) \text{in binary}}}^{\ell\text{ cells}}$&$\overbrace{\hphantom{0 \ldots 01}}^{2^\ell\text{ cells}}$&$\overbrace{\hphantom{(i-1) \text{in binary}}}^{\ell\text{ cells}}$&$\overbrace{\hphantom{0 \ldots 01}}^{2^\ell\text{ cells}}$\\
  \cline{1-5}
    \multicolumn{1}{|c}{$0 \ldots 01$} & 
    \multicolumn{1}{|c}{$(i-1)$ in binary} & 
    \multicolumn{1}{|c}{$0 \ldots 01$} & 
    \multicolumn{1}{|c}{$(i-1)$ in binary} & 
    \multicolumn{1}{|c|}{$0 \ldots 01$} \\
  \cline{1-5}
\end{tabular}
\end{adjustbox}
\centering
\caption{The $i^{\text{th}}$ row of $A$.}
\label{fig:rowi}
\end{figure}

\tandem*

\begin{proof}
To lower bound the number of tandems in $A$, consider any $1\leq i\leq j \leq n$ and $k\in \{1,2,\ldots,2^{\ell}\}$.
Then, let $T$ be the subarray of width $2(2^{\ell}+\ell)$ starting in the $i^{\text{th}}$ row and ending in the $j^{\text{th}}$ row
with the top left cell of $T$ being $A[i][k]$.
We claim that for each choice of $i,j,k$ we obtain a distinct tandem, making 
the number of distinct tandems in $A$ at least $n^2\cdot 2^{\ell}=\Omega(n^{3})$.
It is clear that each such $T$ is a tandem. To prove that all of them are distinct, consider any such $T$.
The position of the leftmost 1 in its top row
allows us to recover the value of $k$. Then, the next $\ell$ cells contain the binary expansion
of $(i-1)$, so we can recover $i$. Finally, the height of $T$ together with $i$ allows us to recover
$j$. Thus, we can uniquely recover $i,j,k$ from $T$, and all such tandems are distinct.
\end{proof}

\begin{figure}[ht]
\includegraphics[width=0.4\textwidth]{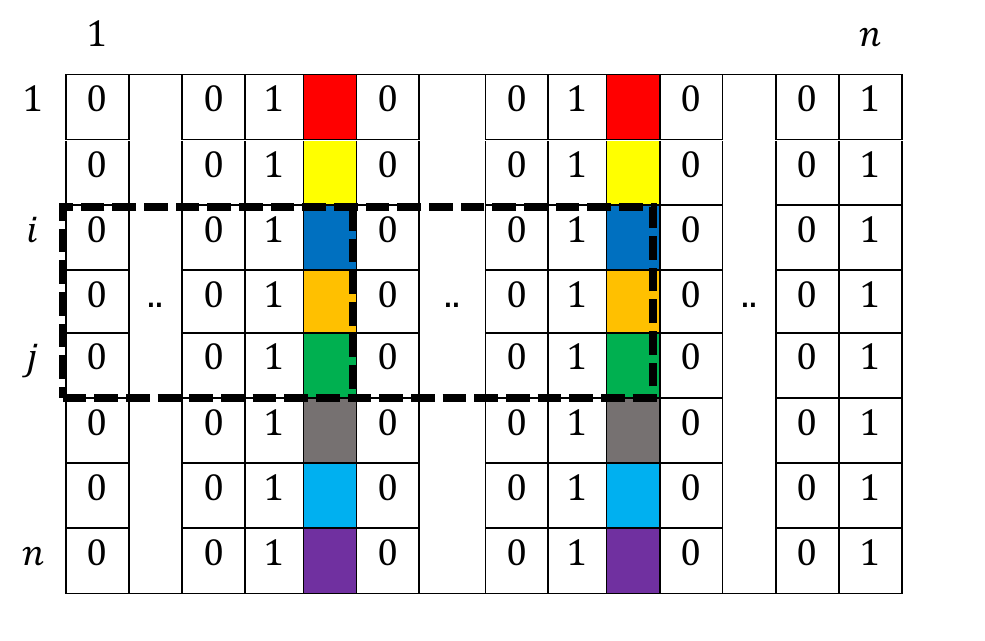}
\centering
\caption{Array $A$, where each color corresponds to the binary representation of the row number.
The black borders correspond to the leftmost tandem of height $j-i+1$ and width $2(2^{\ell}+\ell)$; 
by shifting it to the right we obtain distinct tandems.}
\label{fig:tandems}
\end{figure}
\section{Distinct Quartics}
\label{sec:quartics}
In this section, we show how to construct an $n \times n$ array $A_\ell$ with $\Omega(n^2\log n)$ distinct quartics,
for any $\ell \geq 1$, where $n=3^{\ell}-1$.
The construction is recursive, that is, we construct a series of arrays $A_1, A_2, \ldots, A_\ell$, with $A_{i}$ being defined
using $A_{i-1}$. 
The number of columns of each array $A_{i}$ is the same and equal to $n$. The number of rows
is increasing, starting with $2$ rows in $A_1$ and ending with $n$ rows in the final array $A_\ell$.
We provide the details of the construction in the next subsection, then analyze the number of distinct quartics
in $A_{\ell}$ in the subsequent subsection.
Finally, in the last subsection we show how to use $A_{\ell}$ to obtain an $n' \times n'$ array $A'_{\ell}$ over the binary
alphabet with $\Omega(n'^2\log n')$ distinct quartics.

\subsection{Construction}
First, we provide array $A_1$ of size $2\times n$ with $0$ in all but $4$ cells, namely, cells
$A_1[1][\frac{n-2}{3}+1]$, $A_1[2][\frac{n-2}{3}+1]$, $A_1[1][\frac{2n+2}{3}]$ and $A_1[2][\frac{2n+2}{3}]$ containing
the same special character. In particular, we are dividing the columns into $3$ equal parts,
see Figure~\ref{fig:A1}.

\begin{figure}[ht]
\includegraphics[width=0.3\textwidth]{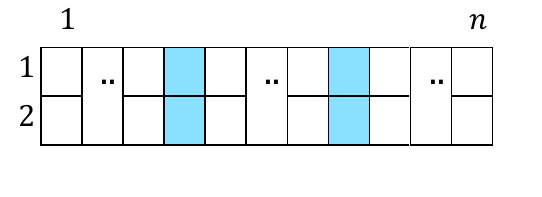}
\centering
\caption{Array $A_1$, where white cells contain $0$ and blue cells contain the same special character.
}
\label{fig:A1}
\end{figure}

Second, we describe the general construction of an $M_{i}\times n$ array $A_{i}$, for $i\geq 2$.
We maintain the invariant that the columns of $A_{i}$ are partitioned into $3^{i}$ maximal ranges of $N_{i}$ columns
consisting of only 0s and separated with single columns, i.e., $N_{1}=\frac{(n-2)}{3}$.
To obtain $A_{i}$, we first vertically concatenate 3 copies of $A_{i}$, using different special
characters in each copy, while adding a single separating row between the copies.
Thus, $M_{i}=3M_{i-1}+2$. Initially, each separating row consists of only 0s. 
For each maximal range of columns in $A_{i-1}$ that consists of only 0s, we proceed as follows.
We further partition the columns of the range into 3 sub-ranges of $\frac{N_{i-1}-2}{3}$ columns, separated by single
columns. We create a new special character and insert its four copies at the intersection of each column separating
the sub-ranges and each separating row.
Overall, we create $3^{i-1}$ new special characters.
See Figure~\ref{fig:A2} for an illustration with $i=2$ and
Figure~\ref{fig:A3} for an example with $n=26$ and $A_{3}$ being the final array.

\begin{figure}[ht]
\includegraphics[width=0.8\textwidth]{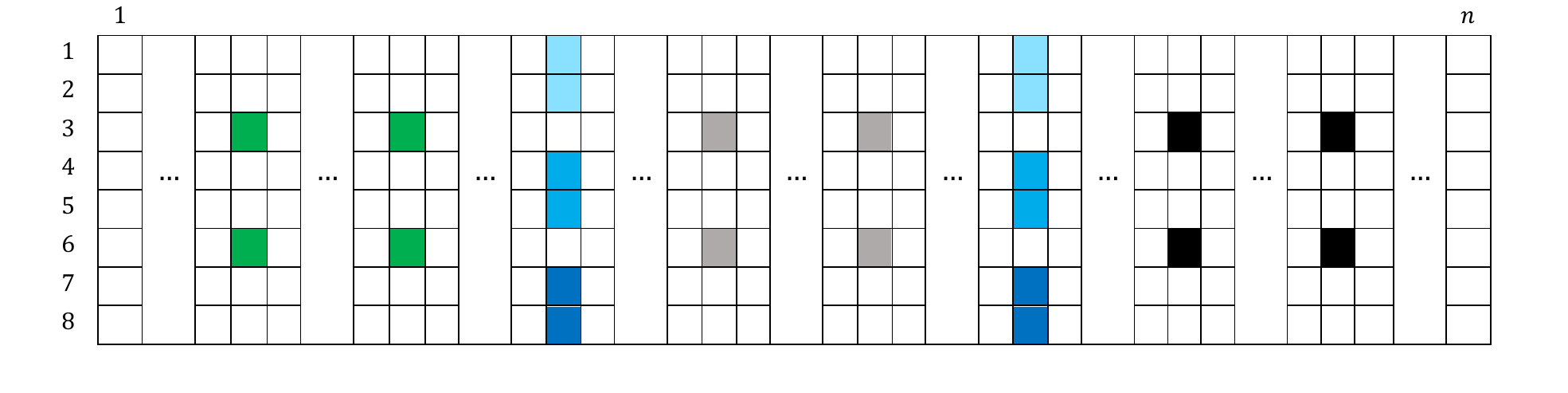}
\centering
\caption{Array $A_2$, where rows 1-2 are the first copy of $A_1$, rows 4-5 are the second copy of $A_1$, and
rows 7-8 are the third copy of $A_1$. Rows $3$ and $6$ are the separating rows. Each color corresponds to a
different special character.}
\label{fig:A2}
\end{figure}

\begin{figure}[ht]
\includegraphics[scale=0.3]{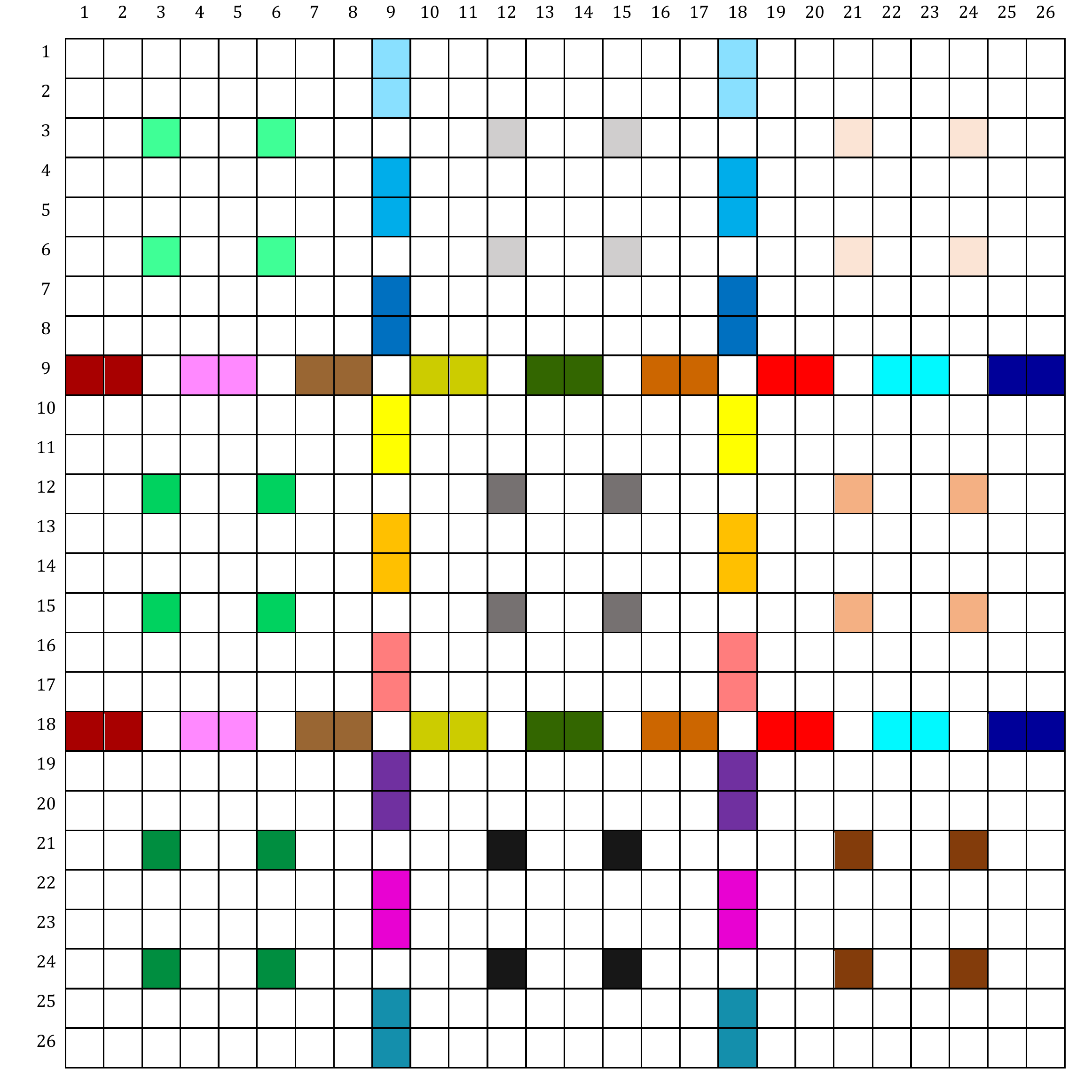}
\centering
\caption{Array $A_3$ includes $3$ copies of $A_2$ in rows 1-8, 10-17 and 19-26.
The separating rows are $9$ and $18$. Note that $A_3$ is the final array for $n=26$.
Additionally, each color represent a different special character, in total $27$ special characters are used in $A_3$.}
\label{fig:A3}
\end{figure}

\subsection{Analysis}
\label{sec:FEq}

Before we move to counting distinct quartics in each $A_{i}$, we recall that the number of columns in each
$A_{i}$ is the same and equal to $n$, while the number of rows $M_{i}$ is described by the recurrence
$M_{1}=2$ and $M_{i}=3M_{i-1}+2$ for $i\geq 2$, hence $M_{i}=3^{i}-1$. The size $N_{i}$ of each maximal range of columns
consisting only of 0s is described by the recurrence $N_1 = \frac{n-2}{3}$ and $N_i = \frac{N_{i-1}-2}{3}$ for $i\geq 2$,
hence $N_i = \frac{n+1}{3^{i}} - 1$.
By setting $n=3^{\ell}-1$ we guarantee that all these numbers are integers.

We now analyze the number of distinct quartics in each $A_{i}$. 
We will be only counting some of them, and
denote by $Q_{i}$ the distinct quartics counted in the following argument that, similarly to the construction,
considers first $i=1$ and then the general case.

For $i=1$, we count distinct quartics that contain special characters. Each of them is of width $2(\frac{n-2}{3}+1)$
and height 2. There are $Q_{1}=\frac{n-2}{3}+1=\frac{n+1}{3}$ such quartics and all of them are distinct, see Figure~\ref{fig:A1q}.

\begin{figure}[ht]
\includegraphics[width=0.3\textwidth]{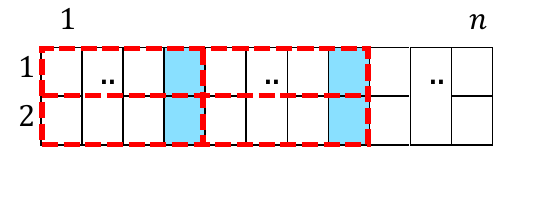}
\centering
\caption{Array $A_1$, where the red border corresponds to the leftmost quartic that contains the special character cells, and by shifting it to the right
we obtain distinct quartics.
}
\label{fig:A1q}
\end{figure}

For the general case of $i\geq 2$, we consider two groups of distinct quartics.
The first group consists of distinct quartics contained in the copies of $A_{i-1}$. 
For each of the $3^{i-1}$ maximal range of $N_{i-1}$ columns of $A_{i-1}$ consisting of 0s,
the second group consists of all possible $(2M_{i-1}+2)\times \frac{2N_{i-1}+1}{3}$ subarrays
contained in that range.
For each such range, we have $\frac{N_{i-1}-2}{3}+1$ possible horizontal shifts and $M_{i-1}+1$ possible vertical shifts
and for each of them we obtain a distinct quartic containing the new special character created for the range.
As we use different special characters in every copy of $A_{i-1}$ and, for every range, in the separating rows of $A_{i}$,
overall we have at least $Q_{i}=3Q_{i-1}+ 3^{i-1}(\frac{N_{i-1}-2}{3} + 1)(M_{i-1} + 1)$ distinct
quartics. See Figure~\ref{fig:A2q2} for an illustration with $i=2$.

\begin{figure}[ht]
\includegraphics[width=0.8\textwidth]{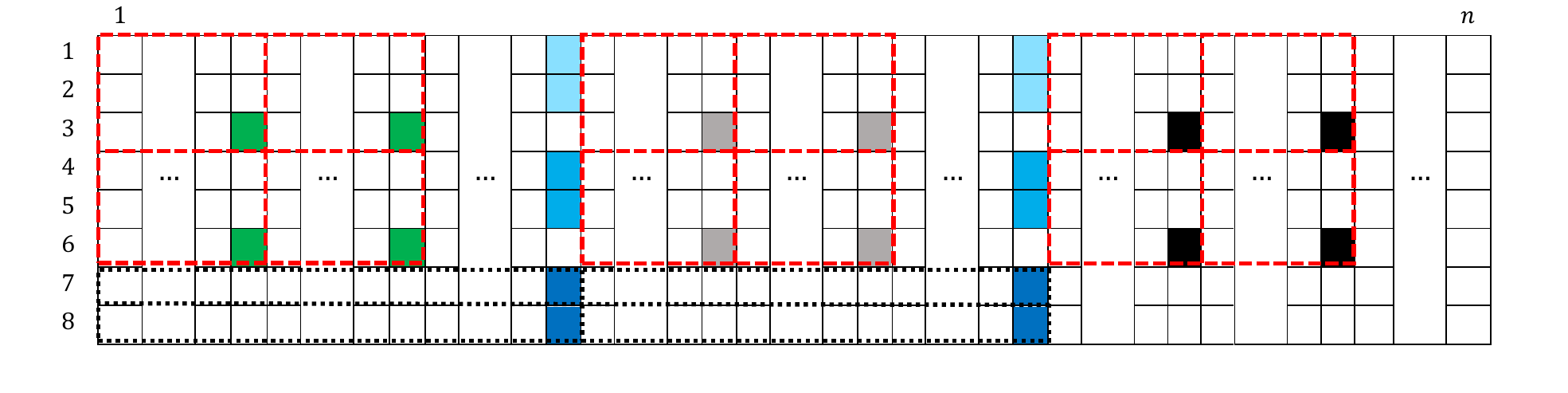}
\centering
\caption{Array $A_2$, where the black border corresponds to the
leftmost quartic in the third copy of $A_{1}$, and by shifting it to the right we obtain distinct quartics.
The red borders correspond to the leftmost quartic in each of the 3 maximal
ranges of columns of $A_{1}$ consisting of 0s; by shifting each of them to the right and down we obtain distinct quartics.}
\label{fig:A2q2}
\end{figure}

Substituting the formulas for $N_{i-1}$ and $M_{i-1}$, we conclude that
$Q_{1}=\frac{n+1}{3}$ and $Q_{i}=3Q_{i-1} + 3^{i-2}(n+1)$ for $i \geq 2$.
Unwinding the recurrence, we obtain that $Q_{i}=3^{i-1}Q_{1} +(i-1)3^{i-2}(n+1)=3^{i-1}\frac{n+1}{3} +(i-1)3^{i-2}(n+1)$.
Therefore, $Q_i = 3^{i-2}i(n+1)$.

\begin{theorem}
There exists an infinite family of $n\times n$ 2D strings containing $\Omega(n^{2}\log n)$ distinct quartics.
\end{theorem}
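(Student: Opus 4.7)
The plan is to invoke the recursive construction and counting argument developed in the preceding subsections at the final level $i=\ell$ and read off the asymptotic bound.

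First, I would verify that $A_\ell$ has the right dimensions. By construction $A_\ell$ has $n=3^\ell-1$ columns. The recurrence $M_1=2$, $M_i=3M_{i-1}+2$ unwinds to $M_i=3^i-1$, so $M_\ell=3^\ell-1=n$, and $A_\ell$ is indeed an $n\times n$ array.

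Second, I would plug $i=\ell$ into the closed form $Q_i=3^{i-2}\,i\,(n+1)$ already obtained from $Q_1=(n+1)/3$ and $Q_i=3Q_{i-1}+3^{i-2}(n+1)$. Using $n+1=3^\ell$, this gives
\[
Q_\ell \;=\; 3^{\ell-2}\,\ell\,(n+1) \;=\; \frac{\ell\,(n+1)^2}{9}.
\]
Since $\ell=\log_3(n+1)=\Theta(\log n)$, we obtain $Q_\ell=\Omega(n^2\log n)$, which proves the theorem for all $n$ of the form $3^\ell-1$.

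To get an infinite family for \emph{all} sufficiently large $n$, I would, given arbitrary $n$, take the largest $\ell$ with $3^\ell-1\le n$ and embed $A_\ell$ into an $n\times n$ array, padding the extra rows and columns with a fresh symbol that does not appear in $A_\ell$. Since $3^\ell-1\ge(n-2)/3$, the embedded copy still supplies $\Omega(n^2\log n)$ distinct quartics, which is the desired conclusion.

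I do not anticipate a real obstacle at this point: the substantive content, namely verifying that the quartics sitting in different recursive copies and the quartics that straddle a separating row across distinct column ranges are genuinely pairwise distinct, has already been handled by introducing a fresh special character at every column range of every level. Each counted quartic is uniquely identified by the special character it contains, which in turn pins down the recursion level, the copy within that level, and the column range. What remains is the asymptotic bookkeeping carried out above.
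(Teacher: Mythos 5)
Your proof is correct and follows essentially the same route as the paper: take $n=3^\ell-1$, note $M_\ell=n$, and plug $i=\ell$ into the closed form $Q_\ell=3^{\ell-2}\ell(n+1)=\Omega(n^2\log n)$. The extra padding argument for arbitrary $n$ is fine but unnecessary, since the infinitely many values $n=3^\ell-1$ already constitute an infinite family.
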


\begin{proof}
For each $\ell \geq 1$, we take $n=3^{\ell}-1$ and define arrays $A_{1},A_{2},\ldots,A_{\ell}$ as described above.
The final array $A_{\ell}$ consists of $M_{\ell}=n$ rows and $n$ columns, and contains at least
$Q_{\ell}=3^{\ell-2}\ell(n+1)$ distinct quartics, which is $\Omega(n^{2}\log n)$.
\end{proof}

While we were not concerned with the size of the alphabet in this construction, observe that the number of distinct special
characters $S_{i}$ in $A_{i}$ is described by the recurrences $S_{1}=1$ and $S_i = 3S_{i-1} + 3^{i-1}$ for $i\geq 2$.
This is because we are using new special characters in each copy of $A_{i-1}$ and adding $3^{i-1}$ new special characters
to divide the maximal ranges of $N_{i-1}$ columns into $3$ parts. Therefore, the size of the alphabet used
to construct $A_{\ell}$ is $S_{\ell}= \frac{(n+1)\log (n+1)}{3}+1$.

\subsection{Reducing Alphabet}
In this subsection, we show how to modify the array $A_{\ell}$ to obtain an array $A'_{\ell}$ over the binary alphabet, for any $\ell\geq 4$.
Informally speaking, we will replace each special character by a small gadget encoding its binary representation,
and then carefully revise the parameters of the new construction, particularly the number of distinct quartics.

Let $\Sigma=\{1,2,\ldots,\sigma\}$ be the alphabet used to construct $A_{\ell}$, where $\sigma=\frac{(n+1)\log (n+1)}{3}+1$.
We define arrays $B_1,B_2,\ldots,B_{\sigma}$ of the same size $ k\times k$, where $ k = \sqrt{\log \sigma}+2$.
The first row and column of every array $B_{c}$ contain only 0s, while the remaining cells of the last row and column
contain only 1s. The concatenation of cells from the middle of $B_{c}$ (without the first and last row and column),
in the left-right top-bottom order, should be equal to the binary representation of $c$.
Now, we construct the array $A'_{\ell}$ from the array $A_{\ell}$
by repeating the recursive construction of arrays $A_{1},A_{2},\ldots,A_{\ell}$,
but replacing a cell containing the character $c$ with the array $B_{c}$. We denote the resulting
arrays $A'_{1},A'_{2},\ldots,A'_{\ell}$.

We now set $n'=n\cdot  k$. Each of the arrays $A'_{i}$ consists of $n'$ columns
and $M_{i}\cdot k$ rows, so
the final array, $A'_\ell$, is of size $n'\times n'$.
We now analyze the number of distinct quartics in $A'_{i}$. 
This will be done similarly as it was for $A_{i}$,
but we must be more careful about arguing quartics as being distinct, because we no longer have
multiple distinct special characters. We first argue that, for all sufficiently wide and tall subarrays
of $R$, the horizontal and vertical shifts are uniquely defined modulo $k$.

\begin{lemma}
\label{lem:shift}
Consider a subarray $R=A'_{i}[x_{1}..x_{2}][y_{1}..y_{2}]$ with width and height at least $ k$.
Then $(x_{1}\bmod  k)$ and $(y_{1}\bmod k)$ can be recovered from $R$.
\end{lemma}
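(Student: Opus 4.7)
The plan is to exploit the fact that the border of every gadget $B_c$ is identical: the first row and first column are entirely $0$, and the remaining cells of the last row and last column are entirely $1$. Consequently, in $A'_i$ the question ``is this cell forced to be $0$ or $1$ by the border of its gadget?'' depends only on the position modulo $k$, not on which character $c\in\Sigma$ lives in the corresponding cell of $A_i$. This will let me read off $x_1 \bmod k$ and $y_1 \bmod k$ purely from the zero-pattern inside $R$.

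First I would establish two structural facts about $A'_i$. Every row of $A'_i$ whose index is divisible by $k$ is identically $0$, because such a row lies in the top row of some copy of $B_c$, and the top row of every $B_c$ is all zero; a symmetric statement holds for columns whose index is divisible by $k$. Next, for any row index $r$ with $r\bmod k\neq 0$, the cell of $A'_i$ at row $r$ and column $y$ equals $1$ whenever $y\bmod k=k-1$, since this cell lies in the last column of some $B_c$ but not in its top row, hence it is one of the ``remaining cells'' that is forced to be $1$ independently of $c$. The analogous statement holds for columns at residue other than $0$ modulo $k$, using the last row of the gadget.

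I would then put these two facts together in the obvious way. Consider the $a$-th row of $R$, namely row $x_1+a$ of $A'_i$. If $(x_1+a)\bmod k=0$, this row of $R$ is identically $0$ by the first structural fact. If $(x_1+a)\bmod k\neq 0$, then since $R$ has width at least $k$, the columns $y_1,\dots,y_2$ contain at least one index $y$ with $y\bmod k=k-1$, and at such a $y$ the entry of $A'_i$ equals $1$, so this row of $R$ is not identically zero. Thus the set of indices $a$ such that row $a$ of $R$ is identically $0$ equals $\{a\in[0,x_2-x_1] : a\equiv -x_1\pmod k\}$, and since $R$ has height at least $k$ this set is nonempty. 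From any element of it we read off $x_1\bmod k$, and an identical argument applied to columns recovers $y_1\bmod k$.

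The step I expect to need the most care is verifying that non-border entries of the gadgets never accidentally produce a zero row or zero column inside $R$; this is exactly where the assumption that both dimensions of $R$ are at least $k$ enters, since it guarantees that each non-border row or column of $A'_i$ intersects a position uniformly forced to be $1$ by the gadget boundary, independently of which special characters of $A_\ell$ happen to appear in the blocks covered by $R$.
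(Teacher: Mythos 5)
Your proof is correct and follows essentially the same route as the paper's: identify the all-zero rows/columns of $A'_i$ (the gadget borders at one fixed residue modulo $k$), note that every other row/column has a forced $1$ in each window of $k$ cells (you make this explicit via the last column/row of each gadget), and use the width/height $\geq k$ hypotheses to detect an all-zero line inside $R$ and read off the residue. The only difference is that you spell out the ``at least one $1$ per block of $k$ cells'' claim in slightly more detail than the paper does.
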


\begin{proof}
We only analyze how to recover $(y_{1}\bmod k)$, recovering $(x_{1}\bmod  k)$ is symmetric.
By construction of $B_{1},B_{2},\ldots,B_{\sigma}$, every $ k^{\text{th}}$ row of $A'_{i}$ consists of only 0s,
while in every other row there is at least one 1 in every block of $ k$ cells. Therefore,
because the width of $R$ is at least $ k$, a row of $R$ consists of 0s if and only if
it is aligned with a row of $A'_{i}$ that consists of 0s. 
Because the height of $R$ is at least
$ k$ such a row surely exists and allows us to recover $(y_{1}\bmod  k)$.
\end{proof}

We argue that the number of distinct quartics in $A'_{1}$ is at least $Q'_{1}=\frac{n\cdot k-2 k}{3} + 1$.
To show this, we consider subarrays spanning the whole height of $A'_{1}$ and of width
$2(\frac{n\cdot k-2 k}{3}+ k)$. There are $\frac{n\cdot k-2 k}{3} + 1$ such subarrays
and each of them is a quartic that fully contains some $B_{c}$. Furthermore, subarrays starting in columns with different
remainders modulo $ k$ are distinct by Lemma~\ref{lem:shift}. Subarrays
starting in columns with the same remainder modulo $ k$ are also distinct, as in such
a case we can recover the special character from $B_{c}$ fully contained in the subarray.

For the general case, we claim that the number of distinct quartics in $A'_{i}$ is at least
$Q'_{i}=3Q'_{i-1} + 3^{i-1}(\frac{N_{i-1}\cdot k-2 k}{3} + 1)(M_{i-1}\cdot k + 1)$
for $i\geq 2$. The argument proceeds as for $A_{i}$; however, we must argue that the counted quartics are
all distinct. By construction, each of them fully contains some $B_{c}$. Thus,
quartics starting in columns with different remainders modulo $ k$ (and also in rows with different
remainders modulo $ k$) are distinct by Lemma~\ref{lem:shift}. Now consider all
counted quartics starting in columns with remainder $y$ modulo $ k$ and rows
with remainder $x$ modulo $ k$. For each of them, we can recover the special
character from $B_{c}$ fully contained in the quartic, so all of them are distinct.

Finally, we lower bound and solve the recurrence for $Q'_{i}$ as follows.
\begin{align*}
Q'_{i}  &= 3Q'_{i-1} + 3^{i-1}(\frac{N_{i-1}\cdot k-2 k}{3} + 1)(M_{i-1}\cdot k + 1) \\
&> 3Q'_{i-1} + 3^{i-2}\cdot  k^{2}  (N_{i-1}-2)M_{i-1}\\
&= 3Q'_{i-1} + 3^{i-2}\cdot  k^{2}  (\frac{n+1}{3^{i-1}}-3)(3^{i-1}-1)\\
&= 3Q'_{i-1} + 3^{i-2}\cdot  k^{2}  (n-3^{i})\frac{3^{i-1}-1}{3^{i-1}} \\
&> 3Q'_{i-1} + 3^{i-3}\cdot  k^{2}  (n-3^{i}) & \MoveEqLeft \text {using } i\geq 2. \\
\end{align*}
Unwinding the recurrence, we obtain that $Q'_{i}  > \sum_{j=2}^{i} 3^{i-j} \cdot 3^{j-3}\cdot  k^{2} (n-3^{j}) > 3^{i-3}\cdot  k^{2}  ( (i-1)n - \frac{3^{i}}{2})$.

\quartics*
\begin{proof}
For each $\ell \geq 4$, we take $n=3^{\ell}-1$ and define arrays $A'_{1},A'_{2},\ldots,A'_{\ell}$ as described above.
The final array $A'_{\ell}$ is over the binary alphabet by construction, consists of $n'$ rows and $n'$ columns, where $n'=n\cdot  k$,
and contains at least $Q'_{\ell}$ distinct quartics. Finally,
\begin{align*}
Q'_{\ell} &> 3^{\ell-3}\cdot  k^{2}\cdot ((\ell-1)n-\frac{3^{\ell}}{2})\\
&=3^{\ell-3}\cdot  k^{2}\cdot ((\ell-1)(3^{\ell}-1)-\frac{3^{\ell}}{2}) \\
&> 3^{\ell-3}\cdot  k^{2}\cdot (\ell-2)(3^{\ell}-1) & \MoveEqLeft  \text{ because } \frac{3^{\ell}}{2} < 3^{\ell}-1  \\
&\geq 3^{\ell-3}\cdot  k^{2}\cdot \frac{\ell}{2} \cdot (3^{\ell}-1) & \MoveEqLeft  \text{ because } \ell-2 \geq \frac{\ell}{2}.
\end{align*}
Therefore, the number of runs in $A'_{\ell}$ is $\Omega(3^{2\ell}\cdot  k^{2}\cdot \ell)=\Omega(n'^{2}\log n')$.
\end{proof}
\section{Runs}
\label{sec:runs}
In this section, we show how to construct an $n\times n$ array $A_{\ell}$ with $\Omega(n^{2}\log n)$
runs, for any $\ell\geq 2$, where $n=2\cdot4^{\ell}$. As in the previous section, the construction is recursive, i.e.,
we construct a series of arrays $A_{1},A_{2},\ldots,A_{\ell}$, with $A_{i}$ being defined using $A_{i-1}$.
Both the number of rows and columns in $A_{i}$ is equal to $2\cdot4^{i}$, starting with 8 rows
and columns in $A_{1}$. We describe the construction in the next subsection, then analyze the number
of runs in $A_{\ell}$ in the subsequent subsection.

\subsection{Construction}
First, we provide array $A_1$ of size $8\times8$ with 1s in the cells
$A_1[1][2]$, $A_1[2][1]$, $A_1[7][8]$ and $A_1[8][7]$, and 0s in the other cells, see Figure~\ref{fig:rA12} (left).

\begin{figure}[b]
\begin{minipage}[c]{0.49\textwidth}
\centering
\includegraphics[width=0.25\textwidth]{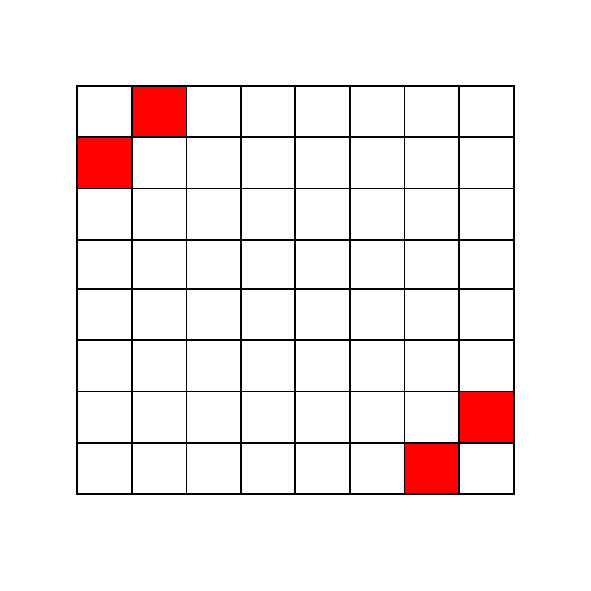}
\end{minipage}
\begin{minipage}[c]{0.49\textwidth}
\centering
\includegraphics[width=0.8\textwidth]{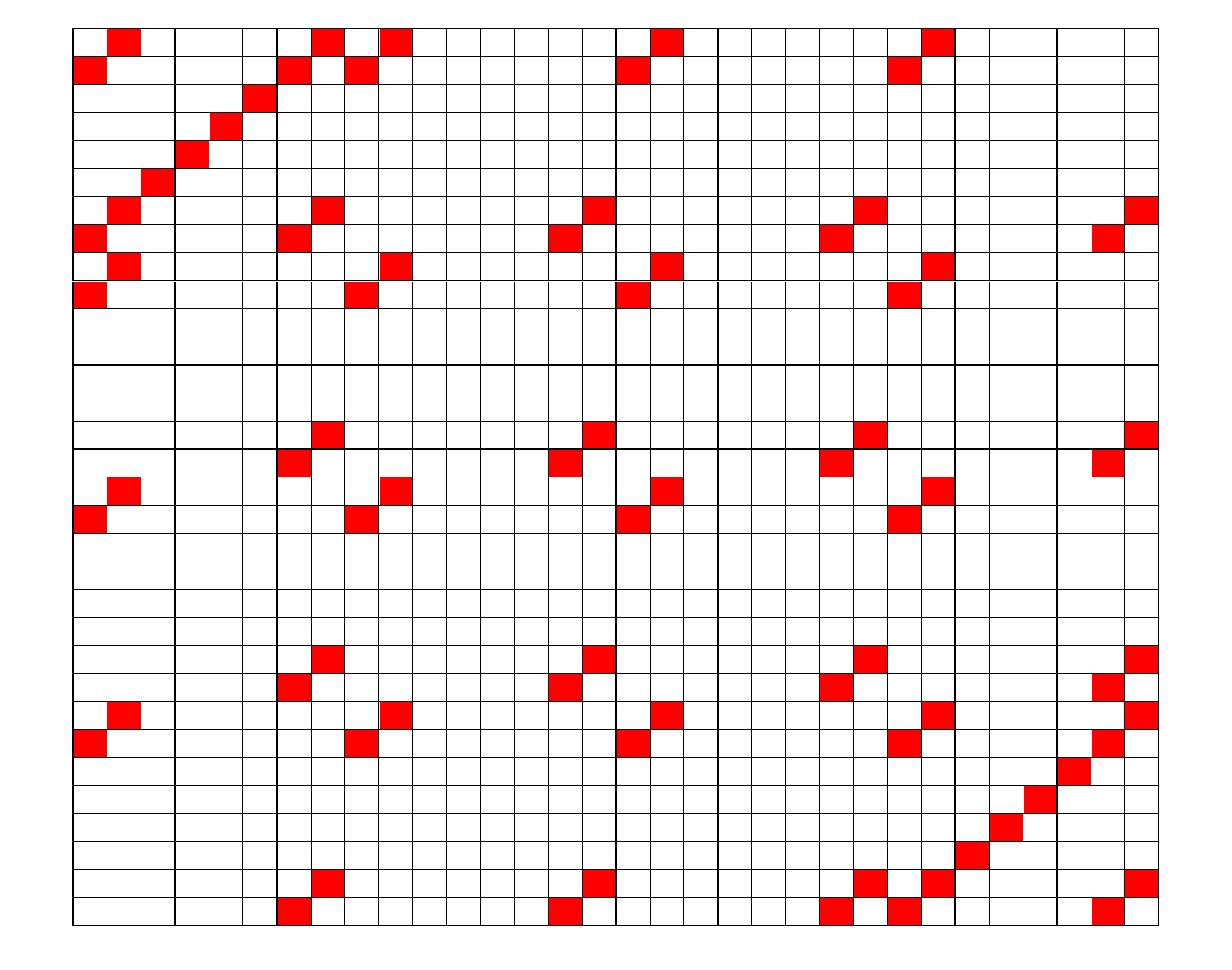}
\end{minipage}
\caption{Left: array $A_1$, where red cells contain 1s and white cells contain 0s. Right: Array $A_2$ consists of $14$ copies of $A_1$ and $2$ copies of $A'_1$. Red cells include 1s and white cells include 0s. Red is used to fill the antidiagonals of $A'_1$.}
\label{fig:rA12}
\end{figure}

Second, we obtain array $A_{i}$ by concatenating $4 \times 4$ copies of array $A_{i-1}$ while
using 1s to fill the antidiagonals in the
upper left and bottom right copy of $A_{i-1}$, with $A'_{i-1}$ denoting such modified copy of $A_{i-1}$, see Figure~\ref{fig:rA12} (right)
for an illustration with $i=2$ and
Figure~\ref{fig:runsA3} for an example with $n=128$ and $A_{3}$ being the final array.

\begin{figure}[ht]
\includegraphics[scale=0.7]{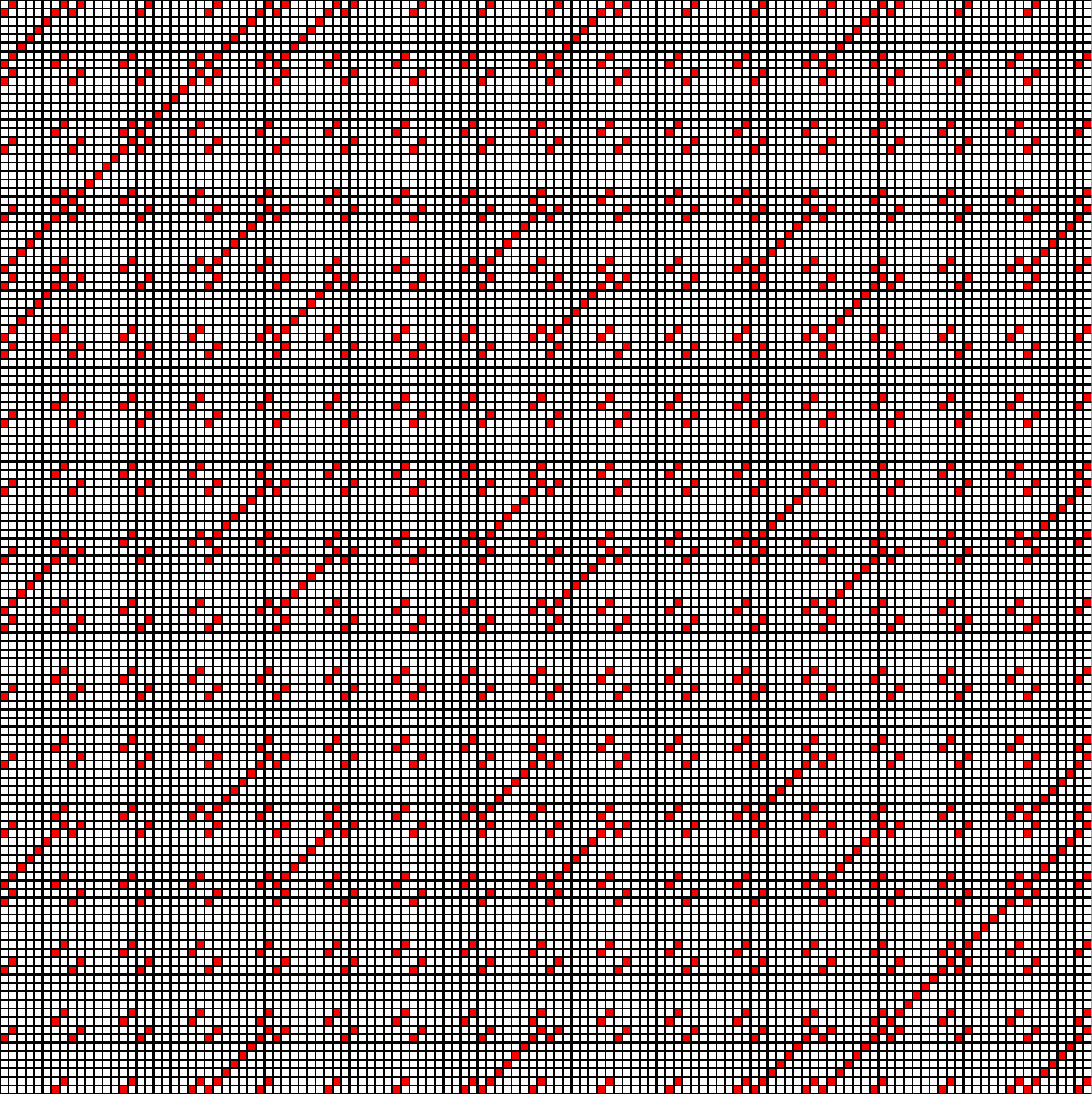}
\centering
\caption{Array $A_3$ includes $14$ copies of $A_2$, $2$ copies of $A'_2$, $216$ copies of $A_1$ and $40$ copies of $A'_1$. Note that, $A_3$ is the final array for $n=128$.}
\label{fig:runsA3}
\end{figure}

The intuition behind the recursive construction is to duplicate the runs obtained in the previous arrays.
For example, the array $A_1$ produces one run that does not touch the boundaries. This is duplicated
$14$ times in $A_2$, hence, $A_1$ contributes $14$ runs to the total number of runs produced by $A_2$.
Moreover, the intuition behind filling the antidiagonals is to produce new runs such that the number of the new runs is equal to
the size of the array up to some constant. As an example, $A_2$ produces $7^2$ new runs between the antidiagonals of
$A'_1$ such that the upper left and the bottom right corners of each run touch exactly two cells of the antidiagonals of the
two copies of $A'_1$. See Figure~\ref{fig:brA2} for an illustration.
Therefore, overall the number of runs produced by $A_2$ is $14+7^2=63$.
The general case is analyzed in detail
in the next subsection.

\begin{figure}[ht]
\includegraphics[scale=0.3]{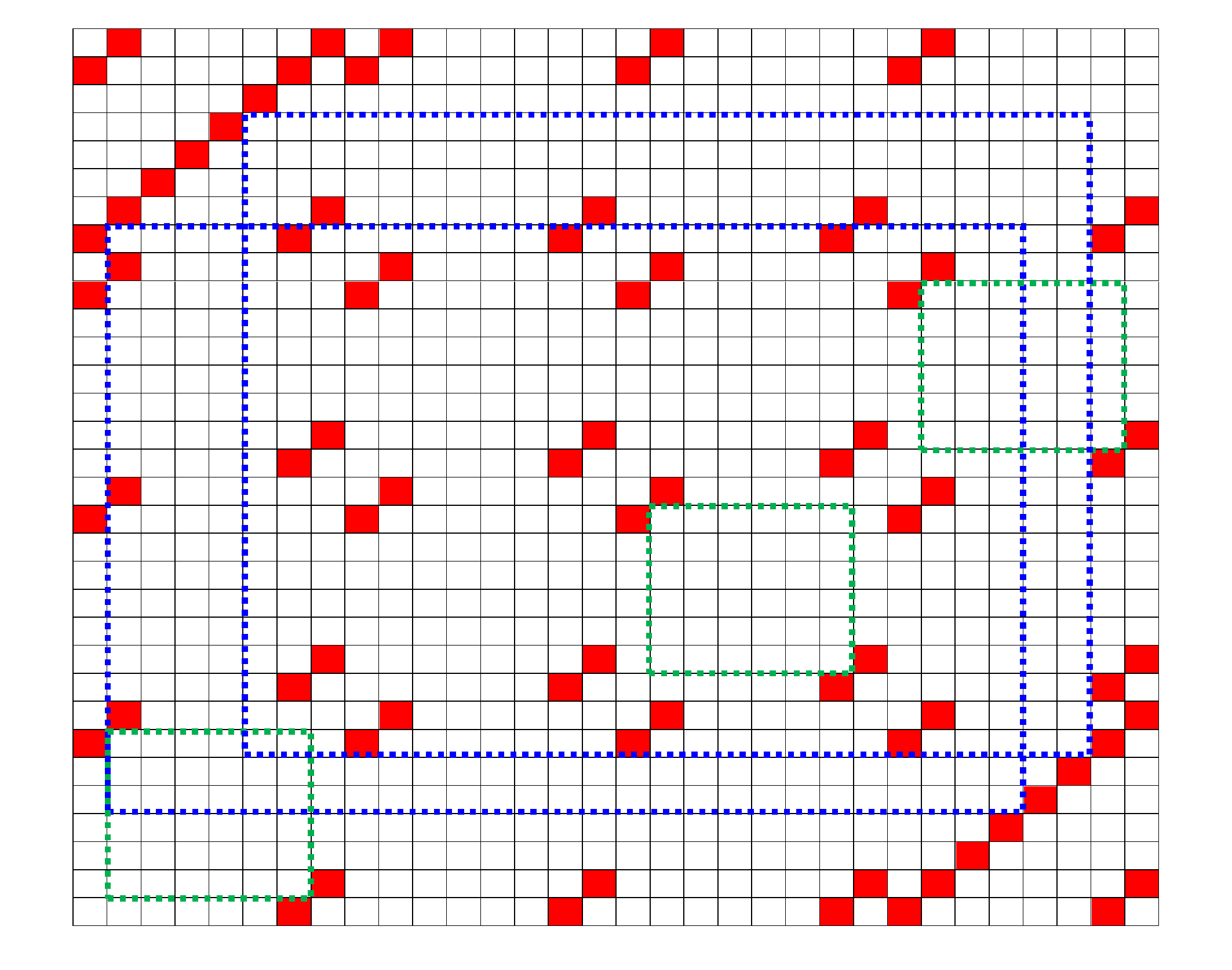}
\centering
\caption{Array $A_2$, where the blue borders correspond to new runs produced by $A_2$. The green borders correspond to runs previously produced by $A_1$.}
\label{fig:brA2}
\end{figure}

\subsection{Analysis}

The number $N_{i}$ of rows and columns in $A_i$ is described by the recurrence $N_{1}=8$ and $N_i = 4N_{i-1}$ for $i\geq 2$,
so $N_i = 2\cdot4^{i}$. By straightforward induction, the antidiagonal of every $A_{i}$ is filled with 0s.

We analyze the number of runs in $A_{i}$. First, we have $R_{i}$ new runs not contained in any of the copies of $A_{i-1}$
or $A'_{i-1}$  such that the upper left and the bottom right corners touch exactly two cells of the antidiagonals of the
two copies of $A'_{i-1}$.

\begin{lemma}
\label{lem:diagonal}
$R_i = (2\cdot 4^{i-1}-1)^2=\frac{16^i}{4}-4^i+1$.
\end{lemma}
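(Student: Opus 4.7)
The plan is to exhibit $(N_{i-1}-1)^2$ distinct runs by parameterizing their corner cells. The antidiagonal of the copy of $A'_{i-1}$ at block position $(1,1)$ of $A_i$ consists of cells $(r, N_{i-1}+1-r)$ for $r \in \{1, \ldots, N_{i-1}\}$. A cell $(r_1, c_1)$ has exactly two of its four orthogonal neighbors on this antidiagonal precisely when $r_1 + c_1 = N_{i-1} + 2$, and requiring those neighbors $(r_1-1, c_1)$ and $(r_1, c_1-1)$ to lie inside block $(1,1)$ forces $r_1 \in \{2, \ldots, N_{i-1}\}$. A symmetric analysis at the copy of $A'_{i-1}$ at $(4,4)$ yields $r_2 + c_2 = 7N_{i-1}$ with $r_2 \in \{3N_{i-1}+1, \ldots, 4N_{i-1}-1\}$, so there are $(N_{i-1}-1)^2 = (2\cdot 4^{i-1}-1)^2$ pairs, which expands directly to $\tfrac{16^i}{4} - 4^i + 1$.

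For each pair I would define $R_{r_1, r_2}$ as the rectangular subarray with corners $(r_1, c_1)$ and $(r_2, c_2)$ and prove it is a run. The key observation is that the sum conditions guarantee no antidiagonal cell of either $A'_{i-1}$ copy lies inside $R_{r_1, r_2}$: cells of the $(1,1)$-antidiagonal all lie strictly above-and-left of $(r_1, c_1)$, and cells of the $(4,4)$-antidiagonal strictly below-and-right of $(r_2, c_2)$. Hence the content of $R_{r_1, r_2}$ agrees with what one would obtain by replacing both $A'_{i-1}$ copies by plain $A_{i-1}$, yielding a tiling of $A_{i-1}$ blocks. Combined with the invariant that every $A_{i-1}$ has antidiagonal entirely $0$, this makes $R_{r_1, r_2}$ both horizontally and vertically periodic with period $N_{i-1}$; by an inductive claim that $A_{i-1}$ admits no non-trivial horizontal or vertical period smaller than $N_{i-1}$, this period is exactly $N_{i-1}$.

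For maximality, extending $R_{r_1, r_2}$ upward by one row introduces the cell $(r_1-1, c_1)$, which lies on the $(1,1)$-antidiagonal and so equals $1$; however, the cell $(r_1-1+N_{i-1}, c_1)$ one $A_{i-1}$-block below lies on the antidiagonal of a plain $A_{i-1}$ block and equals $0$, breaking the vertical period $N_{i-1}$. Any vertical period of the extended subarray must also be a period of the original rectangle and hence a positive multiple of $N_{i-1}$; the next multiple $2N_{i-1}$ exceeds half of the extended height (at most $4N_{i-1}-1$), so the extended subarray is not vertically periodic at all. The downward and horizontal extensions are handled symmetrically.

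The main obstacle I anticipate is rigorously establishing the inductive claim that $A_{i-1}$'s smallest horizontal and vertical periods are $N_{i-1}$, which is what rules out any small accidental period surviving an extension. This is handled by an induction on $i$: the base case $A_1$ has eight pairwise distinct columns and rows, and in the inductive step one uses an $A'_{i-2}$ sub-block at position $(1,1)$ or $(4,4)$ of $A_{i-1}$ to exhibit a column whose antidiagonal $1$ is not matched by any shorter horizontal shift. With periodicity and maximality in hand, distinct pairs $(r_1, r_2)$ give distinct runs (their corner cells differ), and any run of the prescribed type corresponds to exactly one such pair, yielding $R_i = (N_{i-1}-1)^2$.
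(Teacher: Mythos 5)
Your proposal takes essentially the same route as the paper's proof: parameterize the rectangles by the two corner cells adjacent to the antidiagonals of the two $A'_{i-1}$ copies (giving $(N_{i-1}-1)^2$ choices), observe that each such rectangle avoids the filled antidiagonals and is therefore periodic with period $N_{i-1}$ in both directions, and establish maximality because any one-cell extension picks up an antidiagonal $1$ that is not matched by the corresponding cell one $A_{i-1}$-block away, whose antidiagonal is all $0$s. The only blemish is the unjustified (and unnecessary) claim that every vertical period of the extended rectangle is a multiple of $N_{i-1}$ — all you need is that no period $\le N_{i-1}$ survives the extension, which already follows from your minimality claim for $R$ together with the antidiagonal mismatch; in fact you are more explicit than the paper about the smallest period of $R$ being exactly $N_{i-1}$.
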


\begin{proof}
Consider any subarray $R$ of $A_i$ with the upper left and the bottom right corners touching exactly two cells of the antidiagonals
of the two copies $A'_{i-1}$.
It is easy to verify that $N_{i-1}$ is a horizontal and a vertical period of $R$. Therefore, $R$ is $h$-periodic and $v$-periodic.
Now consider extending $R$ in any direction, say by one column to the left. Then the topmost cell of the new
column would contain a 1 from the antidiagonal of $A'_{i-1}$. For the horizontal period of the extended array to remain
$N_{i-1}$ we would need a 1 in the corresponding cell of the antidiagonal
of $A_{i-1}$, but that cell contains a 0, a contradiction. Therefore, any such $R$ is a run.
The number of such subarrays is $(N_{i-1}-1)^{2}=(2\cdot 4^{i-1}-1)^{2}=\frac{16^i}{4}-4^i+1$,
because we have $(N_{i-1}-1)$ possibilities for choosing the upper left and bottom right corner.
\end{proof}

Second, we have the runs contained in the 14 copies of $A_{i-1}$, hence $A_{i-1}$ contributes $14\cdot R_{i-1}$ to the total number of runs in $A_i$. Moreover, whenever $A_{i-1}$ contains a copy of $A_j$, for some $j < i-1$, all new runs of $A_j$ are preserved in $A_{i-1}$ and consequently in $A_i$. Additionally, we have the two copies of $A'_{i-1}$. Because we have filled their antidiagonals with 1s, we lose some of the runs. However, whenever $A'_{i-1}$ contains a copy of $A_{j}$ that does not intersect the antidiagonal, for some $j<i-1$, all new runs of $A_{j}$ are preserved in $A'_{i-1}$ and consequently in $A_{i}$. For example, each copy of $A_{i-1}$ contains $14$ copies of $A_{i-2}$ and each copy of $A'_{i-1}$ contains $10$ copies of $A_{i-2}$ (5 above and 5 below the antidiagonal). Hence, $A_i$ contains $14\cdot14+2\cdot10=216$ copies of $A_{i-2}$, thus $A_{i-2}$ contributes $216\cdot R_{i-2}$ to the total number of runs in $A_i$.
Therefore, in order to count the total number of runs in the final array $A_\ell$, we need to analyze how many copies of $A_{i}$ are in $A_{\ell}$, for $1\leq i\leq\ell$.

Let $X_{i}$ denote the number of copies of $A_{i}$ in $A_{\ell}$, and $Y_{i}$ denote the number of copies of $A'_{i}$ in $A_{\ell}$.
By construction, $A_{i}$ consists of 14 copies of $A_{i-1}$ and 2 copies of $A'_{i-1}$.
Similarly, $A'_{i}$ consists of 10 copies of $A_{i-1}$ (5 above and 5 below the antidiagonal)
and 6 copies of $A'_{i-1}$ (4 intersecting the antidiagonal and the top left and bottom right copy).
Consequently, we obtain the recurrences $X_\ell=1$ and $X_i=14X_{i+1}+10Y_{i+1}$ for $i<\ell$,
$Y_{\ell}=0$ and $Y_i=6Y_{i+1}+2X_{i+1}$ for $i<\ell$. Instead of solving the recurrences,
we show the following.

\begin{lemma}
\label{lem:approx}
$X_{i} \geq \frac{5}{6}16^{\ell-i}$
\end{lemma}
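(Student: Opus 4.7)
My plan is to reduce the bound on $X_i$ alone to a joint statement about both $X_i$ and $Y_i$, since the recurrences are coupled and neither sequence is easy to solve in closed form in isolation.

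First, I would observe that the sum $X_i + Y_i$ satisfies a very clean recurrence: adding the two recurrences gives $X_i + Y_i = (14+2)X_{i+1} + (10+6)Y_{i+1} = 16(X_{i+1} + Y_{i+1})$. Combined with $X_\ell + Y_\ell = 1$, this immediately yields $X_i + Y_i = 16^{\ell-i}$. This is the natural total-count identity (each step of unfolding the construction multiplies the total count of $A_i$-sized blocks by $16$, reflecting the $4\times 4$ tiling).

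Second, the key step is to show that $Y_i \leq \tfrac{1}{5}X_i$ for all $i\leq \ell$. I would prove this by downward induction on $i$. The base case $i=\ell$ is immediate from $Y_\ell=0$. For the inductive step, I would consider the ratio $r_i = Y_i/X_i$ and compute, using the recurrences,
\[
r_i \;=\; \frac{6Y_{i+1}+2X_{i+1}}{14X_{i+1}+10Y_{i+1}} \;=\; \frac{6r_{i+1}+2}{14+10r_{i+1}}.
\]
A short algebraic check shows that the inequality $r_i \leq \tfrac{1}{5}$ is equivalent to $20\,r_{i+1} \leq 4$, i.e.\ $r_{i+1} \leq \tfrac{1}{5}$, so the inductive step closes (with $r=\tfrac{1}{5}$ being a fixed point of the above map).

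Finally, combining the two facts, $X_i \geq \tfrac{5}{6}(X_i+Y_i) = \tfrac{5}{6}\cdot 16^{\ell-i}$, which is exactly the claim. The main subtlety is realizing that $X_i$ by itself does not satisfy a nice single-variable recurrence, but the right invariant to maintain is the bound on the ratio $Y_i/X_i$; once one notices that $\tfrac{1}{5}$ is a fixed point of the Möbius-type map governing this ratio, the rest is routine. No further combinatorial input about the structure of $A_i$ or $A'_i$ is needed beyond the recurrences already derived.
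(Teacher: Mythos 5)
Your proof is correct and follows essentially the same route as the paper: both establish $X_i+Y_i=16^{\ell-i}$ from the summed recurrence and then prove the invariant $X_i\geq 5Y_i$ (your $r_i\leq\tfrac{1}{5}$) by downward induction before combining the two. The only cosmetic difference is that you phrase the induction via the ratio map and start the base case at $i=\ell$ instead of $i=\ell-1$, which is equally valid.
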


\begin{proof}
We first observe that $X_i+Y_i=16(X_{i+1}+Y_{i+1})$, as $A_{i+1}$ consists of the $4 \times 4$ smaller subarrays,
each of them being $A_{i}$ or $A'_{i}$. By unwinding the recurrence, $X_i+Y_i=16^{\ell-i}(X_\ell+Y_\ell) = 16^{\ell-i}$.
Furthermore, we argue that $X_{i} \ge 5Y_{i}$ for every $i<\ell$. This is proved by induction on $i$:
\begin{description}
\item[$i=\ell-1$] $X_{\ell-1}=14X_\ell+10Y_\ell=14\geq 5Y_{\ell-1}=5(6Y_\ell+2X_\ell)=10$.
\item[$i<\ell-1$] Assuming that $X_{i+1} \geq 5Y_{i+1}$, we write $X_{i}=14X_{i+1}+10Y_{i+1} \geq 10X_{i+1}+30Y_{i+1}$
and $5Y_{i}=30Y_{i+1}+10X_{i+1}$, so $X_{i}\geq 5Y_{i}$.
\end{description}
Therefore, $16^{\ell-i} = X_{i}+Y_{i} \leq X_{i} + \frac{X_{i}}{5}$, so $X_{i} \geq \frac{5}{6} 16^{\ell-i}$.
\end{proof}

As explained earlier, whenever a copy of $A_{i}$ occurs in $A_{\ell}$, all of its new runs contribute to the total
number of runs in $A_{\ell}$. Therefore, the total number of runs in $A_{\ell}$ is at least $\sum_{i=1}^{\ell} X_i \cdot R_i$.

\runs*

\begin{proof}
For each $\ell \geq 2$, we take $n=2\cdot 4^{\ell}$ and construct the arrays
$A_{1},A_{2},\ldots,A_{\ell}$ as described above. The final array $A_{\ell}$ is over the binary alphabet by construction, consists of $n$
rows and columns and contains at least $\sum_{i=1}^{\ell} X_i \cdot R_i$ runs. By
Lemma~\ref{lem:diagonal} and~\ref{lem:approx}, this is at least
\begin{align*}
\sum_{i=1}^{\ell} \frac{5}{6}16^{\ell-i} (2\cdot 4^{i-1}-1)^{2} &= \frac{5}{6}16^{\ell}\sum_{i=1}^{\ell}16^{-i}(\frac{16^i}{4}-4^i+1)\\
&=\frac{5}{6}16^{\ell}\sum_{i=1}^{\ell}(\frac{1}{4}-\frac{1}{4^i}+\frac{1}{16^i})\\
&=\frac{5}{6}16^{\ell}(\frac{\ell}{4}+\frac{1}{3\cdot4^\ell}-\frac{1}{15\cdot16^\ell}-\frac{4}{15})\\
&=\frac{5}{6\cdot4}\ell\cdot 16^{\ell}+\frac{5}{6\cdot3}4^{\ell}-\frac{1}{6\cdot3}-\frac{5}{6\cdot3}16^{\ell}\\
&\geq\frac{5}{24}\ell\cdot 16^{\ell}-\frac{1}{18}-\frac{2}{9}16^{\ell}\\
&\geq\frac{1}{24}\ell\cdot 16^{\ell} = \Omega(n^{2}\log n) & \MoveEqLeft \text{ using } \ell \geq 2 \qedhere
\end{align*}
\end{proof}

\bibliographystyle{plainurl}
\bibliography{biblio}

\begin{thebibliography}{10}

\bibitem{Amir1}
A.~Amir and G.~Benson.
\newblock Efficient two-dimensional compressed matching.
\newblock {\em In Data Compression Conference}, pages 279--288, 1992.

\bibitem{Amir2}
A.~Amir and G.~Benson.
\newblock Two-dimensional periodicity in rectangular arrays.
\newblock {\em SIAM Journal on Computing}, 27(1):90--106, 1998.

\bibitem{Amir3}
A.~Amir, G.~Benson, and M.~Farach-Colton.
\newblock An alphabet independent approach to two dimensional pattern matching.
\newblock {\em SIAM Journal on Computing}, 23(2):313--323, 1995.

\bibitem{Amir}
A.~Amir, G.~Benson, and M.~Farach-Colton.
\newblock Optimal parallel two dimensional text searching on a {CREW PRAM}.
\newblock {\em Information and Computation}, 144:1--17, 1998.

\bibitem{ALMS1}
A.~Amir, G.~M. Landau, S.~Marcus, and D.~Sokol.
\newblock Two-dimensional maximal repetitions.
\newblock {\em In 26th ESA}, 112(2):1--14, 2018.

\bibitem{ALMS}
A.~Amir, G.~M. Landau, S.~Marcus, and D.~Sokol.
\newblock Two-dimensional maximal repetitions.
\newblock {\em Theoretical Computer Science}, 812:49--61, 2020.

\bibitem{AG}
M.~Amit and P.~Gawrychowski.
\newblock Distinct squares in circular words.
\newblock {\em In 24th SPIRE}, pages 27--37, 2017.

\bibitem{AB}
A.~Apostolico and V.E. Brimkov.
\newblock Fibonacci arrays and their two-dimensional repetitions.
\newblock {\em Theoretical Computer Science}, 237(1-2):263--273, 2000.

\bibitem{Bannai17}
H.~Bannai, T.~I, S.~Inenaga, Y.~Nakashima, M.~Takeda, and K.~Tsuruta.
\newblock The “runs” theorem.
\newblock {\em SIAM Journal on Computing}, 46(5):1501--1514, 2017.

\bibitem{CRRWZ}
P.~Charalampopoulos, J.~Radoszewski, W.~Rytter, T.~Waleń, and W.~Zuba.
\newblock {The Number of Repetitions in 2D-Strings}.
\newblock {\em In 28th ESA}, 173(32):1--18, 2020.

\bibitem{Cole}
R.~Cole, M.~Crochemore, Z.~Galil, L.~Gasieniec, R.~Eariharan, S.~Muthukrishnan,
  K.~Park, and W.~Rytter.
\newblock Optimally fast parallel algorithms for preprocessing and pattern
  matching in one and two dimensions.
\newblock {\em In 34th FOCS}, pages 248--258, 1993.

\bibitem{ilie2}
M.~Crochemore and L.~Ilie.
\newblock Maximal repetitions in strings.
\newblock {\em Journal of Computer and System Sciences}, 74(5), 2008.

\bibitem{ilie3}
M.~Crochemore, L.~Ilie, and L.~Tinta.
\newblock The "runs" conjecture.
\newblock {\em Theoretical Computer Science}, 412(27):2931--2941, 2011.

\bibitem{CrochemoreIKKRRTW12}
M.~Crochemore, C.~S. Iliopoulos, T.~Kociumaka, M.~Kubica, J.~Radoszewski,
  W.~Rytter, W.~Tyczynski, and T.~Walen.
\newblock The maximum number of squares in a tree.
\newblock {\em In 23th CPM}, 7354:27--40, 2012.

\bibitem{CROCHEMORE2014}
M.~Crochemore, C.~S. Iliopoulos, M.~Kubica, J.~Radoszewski, W.~Rytter, and
  T.~Waleń.
\newblock Extracting powers and periods in a word from its runs structure.
\newblock {\em Theoretical Computer Science}, 521:29--41, 2014.

\bibitem{Crochemore95}
M.~Crochemore and W.~Rytter.
\newblock Squares, cubes, and time-space efficient string searching.
\newblock {\em Algorithmica}, 13:405--425, 1995.

\bibitem{CurrieF02}
J.~D. Currie and D.~S. Fitzpatrick.
\newblock Circular words avoiding patterns.
\newblock {\em Developments in Language Theory}, 2450:319--325, 2002.

\bibitem{DEZA}
A.~Deza, F.~Franek, and A.~Thierry.
\newblock How many double squares can a string contain?
\newblock {\em Discrete Applied Mathematics}, 180:52--69, 2015.

\bibitem{DROUBAY2001539}
X.~Droubay, J.~Justin, and G.~Pirillo.
\newblock Episturmian words and some constructions of de {L}uca and {R}auzy.
\newblock {\em Theoretical Computer Science}, 255(1):539--553, 2001.

\bibitem{FHIL}
J.~Fischer, S.~Holub, T.~I, and M.~Lewenstein.
\newblock Beyond the runs theorem.
\newblock {\em In 22th SPIRE}, pages 277--286, 2015.

\bibitem{FS}
A.~S. Fraenkel and J.~Simpson.
\newblock How many squares can a string contain?
\newblock {\em Journal of Combinatorial Theory, Series A}, 82(1):112--120,
  1998.

\bibitem{FY}
F.~Franek and Q.~Yang.
\newblock An asymptotic lower bound for the maximal number of runs in a string.
\newblock {\em International Journal of Foundations of Computer Science},
  19(1):195--203, 2008.

\bibitem{GawrychowskiKRW15}
P.~Gawrychowski, T.~Kociumaka, W.~Rytter, and T.~Walen.
\newblock Tight bound for the number of distinct palindromes in a tree.
\newblock {\em In 22th SPIRE}, 9309:270--276, 2015.

\bibitem{G}
M.~Giraud.
\newblock Not so many runs in strings.
\newblock {\em In 2nd LATA, volume 5196 of Lecture Notes in Computer Science},
  pages 232--239, 2008.

\bibitem{Ilie05}
L.~Ilie.
\newblock A simple proof that a word of length \emph{n} has at most 2\emph{n}
  distinct squares.
\newblock {\em Journal of Combinatorial Theory, Series {A}}, 112(1):163--164,
  2005.

\bibitem{Ilie07}
L.~Ilie.
\newblock A note on the number of squares in a word.
\newblock {\em Theoretical Computer Science}, 380(3):373--376, 2007.

\bibitem{KociumakaRRW17}
T.~Kociumaka, J.~Radoszewski, W.~Rytter, and T.~Walen.
\newblock String powers in trees.
\newblock {\em Algorithmica}, 79(3):814--834, 2017.

\bibitem{Kolpakov}
R.~Kolpakov and G.~Kucherov.
\newblock Finding maximal repetitions in a word in linear time.
\newblock {\em In 40th FOCS}, pages 596--604, IEEE Computer Society, 1999.

\bibitem{ManeaS15}
F.~Manea and S.~Seki.
\newblock Square-density increasing mappings.
\newblock {\em In 10th WORDS}, 9304:160--169, 2015.

\bibitem{MKIBS}
W.~Matsubara, K.~Kusano, A.~Ishino, H.~Bannai, and A.~Shinohara.
\newblock New lower bounds for the maximum number of runs in a string.
\newblock {\em In Proceedings of the Prague Stringology Conference 2008}, pages
  140--145, 2008.

\bibitem{PSS}
S.~J. Puglisi, J.~Simpson, and W.~F. Smyth.
\newblock How many runs can a string contain?
\newblock {\em Theoretical Computer Science}, 401(1-3):165--171, 2008.

\bibitem{rytter2}
W.~Rytter.
\newblock The number of runs in a string.
\newblock {\em Information and Computation}, 205(9):1459--1469, 2007.

\bibitem{simpson}
J.~Simpson.
\newblock Modified padovan words and the maximum number of runs in a word.
\newblock {\em The Australasian Journal of Combinatorics}, 46:129--146, 2010.

\bibitem{Simpson14}
J.~Simpson.
\newblock Palindromes in circular words.
\newblock {\em Theoretical Computer Science}, 550:66--78, 2014.

\bibitem{thue}
A.~Thue.
\newblock Über unendliche {Z}eichenreihen.
\newblock {\em Norske Vid Selsk. Skr. I Mat-Nat Kl.(Christiana)}, 7:1--22,
  1906.

\end{thebibliography}

\end{document}